\newcommand{\ii}{\mathrm{i}}
\newtheorem{theorem}{Theorem}
\newtheorem{lemma}{Lemma}
\newtheorem{proposition}{Proposition}
\newtheorem{corollary}{Corollary}
\newcommand{\kittyket}
{\ket{\raisebox{-.43ex}{\SchrodingersCat{1}}\!\!}}
\newcommand{\kittybra}
{\bra{\raisebox{-.43ex}{\SchrodingersCat{1}}\!\!}}
\DeclareMathOperator*{\sumint}{%
\mathchoice%
  {\ooalign{$\displaystyle\sum$\cr\hidewidth$\displaystyle\int$\hidewidth\cr}}
  {\ooalign{\raisebox{.14\height}{\scalebox{.7}{$\textstyle\sum$}}\cr\hidewidth$\textstyle\int$\hidewidth\cr}}
  {\ooalign{\raisebox{.2\height}{\scalebox{.6}{$\scriptstyle\sum$}}\cr$\scriptstyle\int$\cr}}
  {\ooalign{\raisebox{.2\height}{\scalebox{.6}{$\scriptstyle\sum$}}\cr$\scriptstyle\int$\cr}}
}
\begin{document}

\title{Semiclassical gravity beyond coherent states}

\author{Shahnewaz Ahmed}
\email{ahmed.shahnewaz16@gmail.com}
\affiliation{Perimeter Institute for Theoretical Physics, Waterloo, Ontario, N2L 2Y5, Canada}
\affiliation{Department of Physics and Astronomy, University of Waterloo, Waterloo, ON N2L 3G1, Canada}

\author{Caroline Lima}
\email{clima@perimeterinstitute.ca}
\affiliation{Perimeter Institute for Theoretical Physics, Waterloo, Ontario, N2L 2Y5, Canada}
\affiliation{Department of Physics and Astronomy, University of Waterloo, Waterloo, ON N2L 3G1, Canada}
\affiliation{Institute for Quantum Computing, University of Waterloo, Waterloo, Ontario, N2L 3G1, Canada}

\author{Eduardo Mart\'{i}n-Mart\'{i}nez}
\email{emartinmartinez@uwaterloo.ca}
\affiliation{Perimeter Institute for Theoretical Physics, Waterloo, Ontario, N2L 2Y5, Canada}
\affiliation{Institute for Quantum Computing, University of Waterloo, Waterloo, Ontario, N2L 3G1, Canada}
\affiliation{Department of Applied Mathematics, University of Waterloo, Waterloo, Ontario, N2L 3G1, Canada}

\date{\today}% It is always \today, today,
             %  but any date may be explicitly specified

\begin{abstract}
We show that it is possible to still use semiclassical gravity together with quantum field theory beyond the regimes where the field state is coherent. In particular, we identify families of cat states (superposition of almost-distinguishable coherent states that have very non-classical features) for which the gravitational backreaction can be  modeled by semiclassical gravity.
\end{abstract}

\maketitle

%\tableofcontents

\section{\label{sec:introduction} Introduction}

%We are yet to have a final theory for quantum gravity, but many advancements towards this direction have been done. 
One condition that any quantum gravity theory has to satisfy is that in the low energy limit, it has to match all well-established results in general relativity and quantum field theory.
Because of that, it not unreasonable to consider a bottom-up perspective  to quantum gravity where general relativity and semiclassical gravity are taken to the most quantum possible regimes that they can still model.
Under this perspective, one of our most powerful tools is quantum field theory in curved spacetimes~\cite{Birrell,WaldBookQFTCS,fulling1989aspects,NavarroSalas}, a framework that analyzes how a classical curved spacetime affects the behaviour of  quantum matter fields. Complementary to this is the study of the gravitational backreaction, that is, how quantum matter fields affect the dynamics and structure of a classical spacetime. Our first attempts to account for this backreaction yielded the theory known as \textit{semiclassical gravity}~\cite{hu2008stochastic, hubook}. In this framework, one considers the Einstein field equations but instead of taking the classical stress-energy tensor as the matter source, one takes the expectation value of its quantum version in the state of the matter field.

As expected for a low energy limit theory, semiclassical gravity is not a good approximation for regimes where the quantum nature of gravity is important. But it breaks down even far away from that regime: Since it takes into account only the expectation value of the stress-energy tensor of the quantum matter fields, when considering quantum states for which the variance or higher moments of this tensor are large enough, we expect semiclassical gravity to not suffice. Indeed, one way to formalize this intuition is using the stochastic gravity framework, a step above semiclassical gravity~\cite{hu2008stochastic, hubook}. This theory still considers the gravitational field to be classical, but instead of implementing the backreaction effect of the matter field via the average of its stress-energy tensor only, it also takes into account its fluctuations. Because this theory accounts for the quantum effects of the variance of the stress-energy tensor, it can be used as a guide to know when semiclassical gravity is a good approximation or not. In a few words, conclusions based on the semiclassical gravity framework are trustworthy in the cases where the variance (and higher moments) of the stress-energy tensor is small in the analyzed quantum state, so that stochastic gravity is not necessary \cite{Kuo93}.

Because of the simplicity of the semiclassical gravity framework, one may be tempted to apply it as widely as possible, and therefore it is relevant to find what families of states of quantum fields allow for its use. In other words, in order to delimit the setups for which semiclassical gravity is a good approximation, one needs to study families of quantum states for the matter fields so that the fluctuations of the stress-energy tensor are small enough. In 1993, Kuo and Ford \cite{Kuo93} studied this problem and one of their conclusions was that coherent states are adequate for this framework. This is not surprising, since the behaviour of coherent states is in all regards rather classical~\cite{Holevo2011}. Semiclasical gravity, however, would be not particularly interesting to model the backreaction of very quantum states, such as large superpositions of distributions of matter. 

Scenarios where one would like to consider the gravitational backreaction of quantum superpositions of classical states are quite relevant, for example, in  experiments where one wants to observe gravity induced entanglement to draw conclusions about the quantum nature of gravity~\cite{GravInd_spinentang,GravInd_MarlettoVedral,GravInd_ChristodoulouRovelli,GravInd_ChristodoulouRovelli2,GravInd_CarneyMullerTaylor,GravInd_PedernalesStreltsovPlenio,GravInd_Rudolph,GravInd_Hofer,GravInd_EduTales}. For this reason, it would be interesting to investigate to what extent there exist families of states such that at the same time 1) they display strong quantum features and 2) they are still within the limits of the semiclassical gravity framework. If such families of states existed, we could use them to effectively describe the backreaction of the structure of spacetime to matter states that are non-classical (unlike the coherent states). Thus perhaps helping to distinguish---in some hypothetical experiment of, for example, gravity induced entanglement---whether stochastic extensions of semiclassical gravity could be falsified when gravity is sourced by matter in non-classical states. %by account for quantum effects in the fabric of spacetime, but without abusing the approximation. 
 In this paper we will show that coherent cat states\footnote{I.e., quantum superpositions of distinguishable coherent states}, known to have quantum behaviour, are suitable for the semiclassical gravity framework.

Our paper is organized as follows. In Section~\ref{sec:setup} we describe the physical setup that will be used throughout this work. In Section \ref{sec:states} we introduce coherent and coherent cat-like states, the latter being the main object of our investigation. In 
Section \ref{sec:semiclassical}, we introduce the semiclassical gravity framework in more detail and state the criterion proposed by Kuo and Ford~\cite{Kuo93} for the validity of semiclassical gravity. In Section \ref{sec:results} we present our results for the validity of semiclassical gravity for coherent cat states, based on (an extension to higher moments) of the Kuo-Ford criterion. In Section \ref{sec:conclusions} we discuss our results. We use natural units $c=\hbar=1$ throughout the paper. 
    
\section{\label{sec:setup} Setup}

Consider a scalar field\footnote{The choice of a scalar field is taken for simplicity and convenience but our results extrapolate to more general setups} $\phi$ in a $(d+1)$-dimensional Lorentzian globally hyperbolic manifold $(\mathcal{M}, g_{\mu \nu})$ with mostly positive signature. Let us consider the Hilbert-Einstein action coupling gravity with the scalar field with an arbitrary coupling to the scalar curvature:
\begin{equation}
    S = -\int \dd^{d+1} \mathsf{x}\sqrt{-g} \ \frac{1}{2} (\partial_{\mu} \phi \partial^{\mu} \phi +m^2\phi^2 + \zeta R \phi^2). 
    \label{Action}
\end{equation}
Here, $\hbar = c = 1$, $g$ is the determinant of the metric $g_{\mu \nu}$, $m$ is the mass of the scalar field, $R$ is the Ricci scalar, and $\zeta$ is some coupling constant. %There are two very common choices for $\zeta$ in the literature; the first one is $\zeta=0$, known as the minimal coupling, and the second one is  $\zeta=\frac{d-2}{4(d-1)}$, regarded as the conformal coupling \cite{weinberg1972gravitation}. 
The corresponding equation of motion for the scalar field can be found by Hamilton's principle and it is given by:
\begin{equation}
    -\Box \phi + m^2 \phi + \zeta R \phi = 0,
    \label{KGeqn1}
\end{equation} 
where $\Box = \nabla_{\mu}\nabla^{\mu} = ({\sqrt{-g}})^{-1}\partial_{\mu}(\sqrt{-g} \ \partial^{\mu})$ is denoted as the  d'Alembertian operator. The global hyperbolicity of the manifold allows us to define an \textit{inner product} between a pair of solutions $\phi_1, \phi_2$ of the equation \eqref{KGeqn1} as follows:
\begin{equation}
    \langle \phi_1,\phi_2 \rangle =  \ii \int_{\Sigma} \left(\phi_2^* \partial_\mu \phi_1 - (\partial _{\mu} \phi_2^*) \phi_1 \right) \text{d}\Omega^{\mu},
    \label{innerprod}
\end{equation}
where $d\Omega^{\mu} = d\Omega \ n^{\mu}$, $d\Omega$ is the \textit{volume} element of the spacelike $d$ dimensional Cauchy hypersurface $\Sigma \subset \mathcal{M}$, and $n^{\mu}$ is a timelike unit vector perpendicular to $\Sigma$. Since this product is independent of the choice of the hypersurface~\cite{Birrell,ford2002d3}, it is a suitable candidate for defining the inner product over the space of the solutions $\mathcal{F}(\mathcal{M})$ of  equation \eqref{KGeqn1}, which is a subspace of smooth functions over the manifold $\mathcal{M}$.

To quantize the field, we will make the additional assumption that the spacetime is stationary, or at least asymptotically stationary. This in order to avoid complications regarding non-unitary equivalence of quantization schemes~\cite{fulling1989aspects,NavarroSalas}. In the next step, we assume $\Phi=\{\phi^+_i,\phi^-_i\}$ is the complete set of mode solutions (positive and negative norm respectively with respect to the stationary time direction) of equation \eqref{KGeqn1}, with the following orthogonality condition:
\begin{equation}
    \langle \phi_i^+,\phi_j^+ \rangle = \delta_{ij}, \quad \langle \phi^-_i,\phi^-_j \rangle = -\delta_{ij}, \quad \text{and} \quad \langle \phi_i^+,\phi^-_j \rangle = 0.
    \label{OrthoCondt}
\end{equation}

Here, the index $i$ stands for all the possible labels that can be associated with the solutions.  Since the spectrum of the differential operator in equation~\eqref{KGeqn1} can be both continuous and/or discrete, these labels could represent either discrete indices or continuous parameters. For example in the case of Minkowski spacetime, we could use a basis of plane waves labeled by their wave vector; but for Anti-de Sitter (AdS) spacetime, we require continuous and discrete labels to identify all the solutions. Moreover,  the delta functions in the set of equations \eqref{OrthoCondt} will be different (Kronecker delta or Dirac delta) depending on the nature of label $i$. Therefore, any $\phi \in \mathcal{F}(\mathcal{M})$ can be represented as
\begin{equation}
    \phi = \sumint_i \left( a^+_i \ \phi_i^+ + a^-_i \ \phi_i^- \right),
    \label{PhiClassical}
\end{equation}
with the following definitions for the expansion coefficients:
\begin{equation}
    a^+_i = \langle \phi,\phi_i^+ \rangle \in \mathbb{C}, \quad \text{and} \quad a^-_i = -\langle \phi,\phi^-_i \rangle \in \mathbb{C}.
\end{equation}

Let a Cauchy surface $\Sigma_t$ be a constant time-slice at some suitable time coordinate $t$ and $n^{\mu}$ be  a unit normal vector to it. The existence of such a surface can be ensured by the stationary property of the spacetime and the whole spacetime can be foliated with these surfaces. Along the time direction, the derivative of the scalar field is $\dot \phi =  n^{\mu}\partial_{\mu} \phi$. Using the Lagrangian density from the action in equation \eqref{Action} (let us call it $\mathcal{L}$) we can define the canonical momentum $\Pi = {\delta \mathcal{L}}/ {\delta \dot \phi}$ corresponding to the scalar field.  

Next, we apply the quantization map $\phi \to \hat \phi, \ \Pi \to\hat{\Pi}$ and  impose the equal time (at time $t$) commutation relations
\begin{equation}
    [\hat \phi(\mathbf{x},t),\hat \Pi(\mathbf{x}, t)] = \ii \hat \openone \, \delta(\mathbf{x}-\mathbf{x'}),
\end{equation}
where $\hat{\openone}$ is the identity in the operator algebra,  $\mathbf{x}, \mathbf{x'} \in \Sigma_t$, and the delta function is defined using the following relation
\begin{equation}
    \int_{\Sigma_t} \delta(\mathbf{x}-\mathbf{x'}) \ \text{d}\Omega = 1.
\end{equation}
This quantization procedure will change the classical observable $\phi$ in equation \eqref{PhiClassical} to the quantum mechanical operator 
\begin{equation}
    \hat \phi = \sumint_i \left( \hat a_i \ \phi_i^+ + \hat a^{\dagger}_i \ \phi_i^- \right),
    \label{PhiQuantum}
\end{equation}
with $[\hat a_i, \hat a_j^{\dagger}] = \delta_{ij} \hat{\openone}$. Hence, using the annihilation operator $\hat a_i$ we can define a state $\ket{0}$ which can be annihilated by all the annihilation operators. Moreover, a Hilbert space $\mathcal{H}$ can be constructed using Fock states by applying the creation operator $a^{\dagger}_i$ on the vacuum state $\ket{0}$. This procedure works well for Minkowski spacetime and generates a Poincar\'e invariant ground state; however, for general curved spacetimes, such a state is not unique. In lieu of spacetime symmetries we do not have a privileged `vacuum' state and hence a natural Fock basis. We can always find another set of mode solutions $\Phi'$ different from $\Phi$, which can span $\mathcal{F}(\mathcal{M})$ and produce a different, non-equivalent, vacuum state $\ket{0'} \neq \ket{0}$. 

This creates an ambiguity in the definition of ``particle" found in QFT in flat spacetime. This nonuniqueness is behind phenomena like, e.g., particle creation in the early universe~\cite{sexl1969production, GibbonsHawking} or the Unruh effect~\cite{Fulling1973, Davies1975,Unruh,UnruhWald1984,Takagi,CrispinoMatsas}. With this possible ambiguity in mind, it is still possible to manipulate quantum fields in curved backgrounds~\cite{hollands2015quantum}. Within this framework, many productive results have been obtained, including Hawking radiation~\cite{hawking1974black, hawking1975particle, wald1975particle}, the Casimir effect in curved spacetime \cite{ford1975quantum, ford1976quantum}, as well as applications to cosmology \cite{mukhanov2005physical, dodelson2020modern}.

Within the  framework of quantum fields in curved spacetime, the stress-energy tensor $T^{\mu \nu} \coloneqq (2/\sqrt{-g}) \  (\delta S/ \delta g_{\mu \nu})$ plays a prominent role, since it is stress-energy that sources gravity in general relativity. For the action $S$ in equation \eqref{Action}, it becomes
\begin{align}
    T^{\mu \nu} =& \nabla^{\mu} \phi \nabla^{\nu} \phi - \frac{1}{2}g^{\mu \nu}\left(\nabla^{\rho} \phi \nabla_{\rho} \phi + m^2 \phi^2\right) \nonumber\\
    &+ \zeta \left(G^{\mu \nu} - g^{\mu \nu} \Box + \nabla^{\mu} \nabla^{\nu} \right) \phi^2,
    \label{eq:Tmunu}
\end{align}
where $G_{\mu \nu}$ is the Einstein tensor \cite{callan1970new}. After the quantization procedure $T^{\mu \nu}[\phi] \to \hat T^{\mu \nu}[\hat{\phi}]$ (with Weyl ordering), the operator $\hat T^{\mu \nu}[\hat{\phi}]$ becomes ill-defined because the mathematically well-behaved operator $\hat \phi(\mathsf{x}), \mathsf{x} \in \mathcal{M}$, is a distribution (must be integrated by modulating with a function $f(\mathsf{x})$ over compact support) and non-linear operations (like multiplication) involving distributions are not well-defined. To remedy the situation, we need to regularize and renormalize the expectation values involving $\hat T^{\mu \nu}$. There are several methods available to do so \cite{hu2008stochastic}, and in the next section we make use of the most common one (normal ordering) to describe the theoretical framework of semiclassical gravity.

\section{\label{sec:states} Coherent and cat-like states}

%%%%%%%%%%%%%%%%%%%%%%%%%%%%%
%%%%%%%%%%%%%%%%%%%%%%%%%%%%%
\subsection{Coherent states}
Coherent states are sometimes regarded as the boundary between classical and quantum states of a quantum field~\cite{robertmonique}. This is related to the fact that coherent states have Poissonian distributions in the `number of excitations' observable. For this reason they are typically considered as good models for macroscopic coherent light, e.g., laser radiation trapped in an optical cavity. Consistently, those states do not display genuine quantum behaviour.  Indeed, their Wigner functions are Gaussian probability distributions and, as such, they can in principle be modeled within classical field theory (see, e.g.,~\cite{Edulectnotes}). Coherent states are defined by being right-acting eigenstates of the annihilator operators:
\begin{eqnarray}
    a_i \ket{\alpha}  = \alpha_i \ket{\alpha},
    \label{coherdef}
\end{eqnarray}
where $\alpha_i$ is a complex-valued distribution (that we will call coherent amplitude). These states can be fully characterized by applying a (phase space) displacement operator to some Hadamard state $\ket{0}$: 
\begin{equation}
    \ket{\alpha} = \hat D(\alpha) \ket{0},
    \label{dispfromvac}
\end{equation} 
where 
\begin{equation}
    \hat D(\alpha) \coloneqq \exp\left(\hat{A}^{\dagger}(\alpha) - \hat{A}(\alpha)\right) ,
    \label{displacmentOp}
\end{equation}
and we defined
\begin{align}
    \hat{A}(\alpha) = \sumint_i \alpha^{*}_{i} \, \hat a_i, \,\quad \text{and}\quad\, 
    \hat{A}^{\dagger}(\alpha) = \sumint_i \alpha_{i} \, \hat a^{\dagger}_i, 
    \label{AAdef}
\end{align}
where $\alpha_i \coloneqq \langle \alpha, \phi^+_i \rangle \in \mathbb{C}$ and $\alpha_i^*$ is the complex conjugate of $\alpha_i$. % Note that here $\alpha$ contains all the possible continuous and discrete indices present in the representation of the Hilbert space $\mathfrak{h}$.  Following \cite{robert2021coherent}, we define for every $\alpha \in \mathfrak{h}$, a Weyl translation operator (also known as displacement operator)
%\begin{eqnarray}
%    \hat D(\alpha) = \exp\left( \int %\dd^3\bm{k} \left[\alpha(\bm{k})\hat %a_{\bm{k}} - \alpha^*(\bm{k})\hat %a^{\dagger}_{\bm{k}}  \right] \right),
%\end{eqnarray}
%where $\alpha(\bm k)$ is a distribution to derive explicit formula for coherent state \cite{truax1985baker}
%Phys. Rev. D 31, 1988 (1985).
%\begin{eqnarray}
%    \ket{\alpha} = \hat D(\alpha) \ket{0}.
%\end{eqnarray}
The relationship between equations ~\eqref{coherdef} and ~\eqref{dispfromvac}  can be easily proven by using the commutation relation
\begin{eqnarray}
\label{eqn:ccrelation}
    [ \hat a_i, \hat D(\alpha)] = \alpha_i \hat D(\alpha),
\end{eqnarray}
which is derived in Appendix~\ref{appendixA}.
%%%%%%%%%%%%%%%%%%%%%%%%
%Note that here $\alpha$ contains all the possible continuous and discrete indices present in the representation of the Hilbert space $\mathfrak{h}$. 
%Following \cite{robert2021coherent}, we define for every $\alpha$, a Weyl translation operator (also known as displacement operator)
%\begin{equation}
 %   \hat D(\alpha) \coloneqq \exp\left(\hat{A}^{\dagger}(\alpha) - \hat{A}(\alpha)\right) \in \mathcal{B}(\mathcal{H}).
 %   \label{displacmentOp}
%\end{equation}
%%%%%%%%%%%%%%%%%%%%%%%%

%These states are relevant in many application specially in quantum optics. \cite{scully1999quantum}
%M. O. Scully and M. S. Zubairy, Quantum optics
%We can produce a same setup for studying entanglement behaviour  using two coherent states $\ket{\alpha}$ and $\ket{-\alpha}$. We will assume by appropriate choice of distribution $\alpha(k)$ we can make these two states orthonormal. This will allow us to represent $\ket{\alpha}$ and $\ket{-\alpha}$ as $\ket{01}$ and $\ket{10}$ respectively. 
%%%%%%%%%%%%%%%%%%%%%%%%%%%%%
%%%%%%%%%%%%%%%%%%%%%%%%%%%%%

\subsection{Cat-like states}

In contrast to the coherent states defined above, cat-like states are built out of simple superpositions of coherent states:
\begin{equation}
\big|\Psi_{_{\SchrodingersCat{1}}}\big\rangle=a\ket{\alpha}+b\ket{\beta},
\end{equation}
where $a, b \in \mathbb{C}$ are such that the state is normalized.

Coherent states are not orthogonal, but one can choose a cat state so that if the magnitude of the coherent amplitude $|\alpha|^2= \langle \alpha, \alpha \rangle$ is large enough, the inner product between the two states becomes effectively negligible\footnote{The inner product between two coherent states is given by \cite{robertmonique}
    \begin{align*}
        \braket{\alpha}{\beta} = \exp( \ii  \, \mathfrak{Im}\{\langle \alpha, \beta \rangle\}) \exp(-\frac{|\beta-\alpha|^2}{2}).  
        \label{eq:alphabetasup}
    \end{align*}
}. This is the case for the canon cat state\footnote{Namely, $
    \braket{\alpha}{-\alpha} = e^{-2|\alpha|}$.}
\begin{equation} \label{eq:catt}
    \kittyket=a\ket{\alpha}+b\ket{-\alpha},
\end{equation}
where the inner product between $\ket{\alpha}$ and $\ket{-\alpha}$ is exponentially suppressed with the coherent amplitude.

Unlike the coherent states, cat states are highly non-classical. Intuitively, this may not be surprising since, like Schr\"odinger cat, they are the superposition of two (approximately) distinguishable macroscopic states of the field. More rigorously, the cat states' Wigner function is not a probability distribution since it takes on negative values. This is the hallmark of non-classical behaviour.

In summary, cat states are particularly interesting because a) they can easily be produced in the lab and b) they display genuinely quantum behaviour.

% \shah{Put inner product}

\section{\label{sec:semiclassical} Semiclassical Gravity}

Semiclassical Gravity is perhaps the simplest way to account for the effect of  quantum matter on the classical gravitational field within the context of quantum field theory in curved spacetimes~\cite{hubook}. 

To introduce semiclassical gravity, we begin from Einstein equations, given by
\begin{equation}\label{eq:einstein}
    G_{\mu\nu} = 8\pi G_N T_{\mu\nu},
\end{equation}
where $G_{\mu\nu}$
is the Einstein tensor, and $T_{\mu\nu}$ is the stress-energy tensor for matter.
Notice that this equation relates the geometry of the spacetime, on the LHS, with properties of matter, on the RHS. Therefore, if we want to introduce quantum matter, we need to upgrade the matter observables (including the stress-energy tensor) to self-adjoint operators on the matter's Hilbert space. However, as of today, we do not know how to map dynamical functions from the LHS  (the gravity sector) to self-adjoint operators, so we cannot just ``put hats'' on the RHS of Einstein equations. Semiclassical gravity addresses this by taking the expectation value of the stress-energy tensor and thus giving a set of dynamical equations for the spacetime geometry, at least on average. In other words, instead of equation~\eqref{eq:einstein}, we write
\begin{equation}\label{eq:semiclassicaleinstein}
    G_{\mu\nu} = 8\pi G_N \langle\hat{T}_{\mu\nu}\rangle.
\end{equation}
These are known as the semiclassical Einstein Equations.

The first problem one encounters when doing this is that $\langle \hat T_{\mu\nu}\rangle$ is divergent. However, it can be renormalized. The simplest way is to use the empirically obsverved fact that zero point energy does not seem to have a measurable impact in the curvature of spacetime. That is, when the quantum fields are in vacuum, observations of the gravitational field are compatible with flat spacetime. Therefore we can subtract the (divergent) Minkowski vacuum expectation from the stress-energy density. This mechanism works because the divergence in the expectation of the stress-energy density for a scalar field $\hat\phi$ happens at the coincidence limit, and for a regular enough spacetime, the (UV-divergent) coincidence limit coincides with the Minkowski vacuum expectation value~\cite{Birrell}.

To implement this we define the normal ordered stress-energy tensor as
\begin{equation}
    :\!\hat T_{\mu\nu}\!: \,\,\,\coloneqq\,\,\, \hat T_{\mu\nu} - \expval{\hat T_{\mu\nu}}_0,
\end{equation}
where $\langle\hat T_{\mu\nu}\rangle_0 = \bra{0} \hat T_{\mu\nu}\ket{0}$. We define the more general normal ordering operation for polynomials of the field amplitude (and its derivatives) as its usual operational definition~\cite{Weinberg}: the normal-ordered expectation is the expectation value of the operators once annihilation operators are placed on the right side and creation operators on the left side. In other words, when we take expectation values of powers of the stress-energy tensor---for example, to build the variance of the stress-energy, and its higher statistical moments---the expectation value is understood to be taken as a renormalized (by normal order) expectation. This is akin to subtracting all the divergent terms that come from vacuum expectations of the stress-energy density.

It is well-known that semiclassical gravity is just an effective theory. Even though there are no experiments yet that have found a regime where it is not applicable (although this may change in the next decades~\cite{GravInd_spinentang,GravInd_MarlettoVedral,GravInd_ChristodoulouRovelli,GravInd_ChristodoulouRovelli2,GravInd_CarneyMullerTaylor,GravInd_PedernalesStreltsovPlenio,GravInd_Rudolph,GravInd_Hofer,GravInd_EduTales}), we know that there is no consistent way of coupling classical fields to quantum fields~\cite{Terno, BarceloGaray} and, therefore, semiclassical quantum gravity is just an effective theory whose regime of applicability is limited.

The question of how limited this framework is was first addressed in~\cite{Kuo93,Ford00}. The argument in those papers is quite simple and can be summarized as follows: as long as the fluctuations of the stress-energy density do not dominate over its expectation value, semiclassical gravity is applicable. To quantify this, the following estimator was proposed~\cite{Kuo93}:
\begin{equation}
    \Delta_{\mu\nu\lambda\rho}{(\mathsf{x},\mathsf{x}')} \!=\! \left| \frac{\langle :\!\hat T_{\mu \nu}(\mathsf{x}) \hat T_{\lambda \rho}(\mathsf{x}')\!: \rangle \!-\! \langle :\!\hat T_{\mu \nu}(\mathsf{x})\!: \rangle \langle \hat T_{\lambda \rho}(\mathsf{x}')\!: \rangle }{\langle :\!\hat T_{\mu \nu}(\mathsf{x}) \hat T_{\lambda \rho}(\mathsf{x}')\!: \rangle} \right|.
    \label{eq:KuoFord}
\end{equation}
This estimator is the ratio between the variance of the stress-energy tensor and the expectation value of its square. As per the argument above, if this estimator is $\Delta_{\mu\nu\lambda\rho}(\mathsf{x},\mathsf{x}')\ll1$ for all $\mathsf{x},\mathsf{x}'$, then  we are within the approximation of semiclassical gravity. In the original work~\cite{Kuo93} it was found that for coherent states, this condition is fulfilled. However, if semiclassical gravity as a dynamical framework for spacetime were to work only for coherent states, one could question how much of the quantum nature of matter would be captured by this model. Indeed, as discussed above, coherent states are good models for macroscopic classical states for matter, so if they are the only safe choice for semiclassical gravity, this would put into question why use a quantum treatment for matter to begin with. However, as we will discuss, this condition (i.e., $\Delta_{\mu\nu\lambda\rho}(\mathsf{x},\mathsf{x}')\ll1$ for all $\mathsf{x},\mathsf{x}'$) is also satisfied for families of matter states that display strongly quantum behaviour.

This estimator is particularly useful for Gaussian states, since for those states all statistical moments of quadratic observables are functions of the second and first moments, so that satisfying the Kuo-Ford criterion guarantees that the state gravitates approximately semiclassically. 
However, superpositions of Gaussian states are not Gaussian. Hence, making sure that the Kuo-Ford  estimator~\eqref{eq:KuoFord} is small does not guarantee that semiclassical gravity is applicable. This is because, even if variance is not relevant, higher moments can be non-negligible when compared to the magnitude of the expectation of the renormalized stress-energy tensor.

As we will see, similar estimators to equation \eqref{eq:KuoFord} can be built for the higher moments of the renormalized stress-energy density. We will need to take them into account to decide whether some particular choices of cat states may still gravitate semiclassically. This is what we will do in the next section.

\section{\label{sec:results} Results}

The goal of this section is to show that cat states are suitable for semiclassical gravity. We will do so by showing not only that they satisfy the criterion proposed by Kuo and Ford \cite{Kuo93}, but also that these very non-classical states of a scalar field have vanishing symmetrized higher (regularized) moments of the stress-energy density.
That is, we want to show that for the gravitational backreaction of cat states, semiclassical gravity is still valid. In the following, we proceed by proving a series of statements that will lead to this result.
\begin{lemma}
    \label{lemma:phi}
    Let $\hat \phi(\mathsf x)$ be a scalar quantum field, $\mathfrak{D}_{\mathsf x}$ some linear differential operator, ${\mathfrak{h}}$ and $\,\bar{\mathfrak{h}}\,$ $\mathbb{C}$-linear vector spaces spanned by $\{\phi^+_i\}$ and $\{\phi^-_i\}$, respectively. Then,
    \begin{equation}
        \bra{\alpha}\mathfrak{D}_{\mathsf x} \hat\phi (\mathsf x)\ket{\beta} = \braket{\alpha}{\beta} \mathfrak{D}_{\mathsf x}(\beta(\mathsf x) + \bar{\alpha}(\mathsf x)), 
    \end{equation}
    where $\beta(\mathsf x) =\sumint_i \beta_i \phi_i^+(\mathsf x)\in \mathfrak{h}$ and   $\Bar{\alpha}(\mathsf x) = \sumint_i \alpha^*_i \phi_i^-(\mathsf x) \in \bar{\mathfrak{h}}$.
\end{lemma}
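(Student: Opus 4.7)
The plan is to observe that $\mathfrak{D}_{\mathsf x}$ acts only on the spacetime argument and not on the Fock-space operators, so it commutes with taking the matrix element $\bra{\alpha}\cdot\ket{\beta}$. This reduces the lemma to computing $\bra{\alpha}\hat\phi(\mathsf x)\ket{\beta}$ and then reapplying $\mathfrak{D}_{\mathsf x}$ at the end.

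First, I would substitute the mode expansion \eqref{PhiQuantum} for $\hat\phi(\mathsf x)$, yielding
\begin{equation}
\bra{\alpha}\hat\phi(\mathsf x)\ket{\beta} = \sumint_i \Bigl( \bra{\alpha}\hat a_i\ket{\beta}\,\phi_i^+(\mathsf x) + \bra{\alpha}\hat a_i^\dagger\ket{\beta}\,\phi_i^-(\mathsf x) \Bigr).
\end{equation}
Next, I would use the coherent-state eigenvalue equation \eqref{coherdef}, namely $\hat a_i\ket{\beta}=\beta_i\ket{\beta}$, acting to the right in the first term, and its Hermitian conjugate $\bra{\alpha}\hat a_i^\dagger=\alpha_i^*\bra{\alpha}$, acting to the left in the second term. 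This turns every matrix element into the scalar overlap $\braket{\alpha}{\beta}$ times a $c$-number, so that
\begin{equation}
\bra{\alpha}\hat\phi(\mathsf x)\ket{\beta} = \braket{\alpha}{\beta}\sumint_i \Bigl(\beta_i\,\phi_i^+(\mathsf x)+\alpha_i^*\,\phi_i^-(\mathsf x)\Bigr).
\end{equation}
Recognising the two sums as precisely the definitions of $\beta(\mathsf x)\in\mathfrak{h}$ and $\bar\alpha(\mathsf x)\in\bar{\mathfrak{h}}$, the bracket on the right-hand side equals $\beta(\mathsf x)+\bar\alpha(\mathsf x)$. Finally I would push $\mathfrak{D}_{\mathsf x}$ back in (by its $\mathbb{C}$-linearity and its commutation with the $\mathsf x$-independent scalar $\braket{\alpha}{\beta}$) to obtain the claimed identity.

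The only real subtlety, and where I would expect the main obstacle to lie, is justifying the interchange of the differential operator $\mathfrak{D}_{\mathsf x}$ with the (possibly continuous) sum $\sumint_i$ over modes, and the well-definedness of acting $\hat a_i$ directly on $\ket{\beta}$ inside the mode integral. In the setting of the paper this is unproblematic provided the field is understood as a distribution smeared against a test function---equivalently, the coherent state has a sufficiently regular amplitude $\beta_i$---so that term-by-term differentiation of the series for $\beta(\mathsf x)$ and $\bar\alpha(\mathsf x)$ is permitted. I would simply invoke this regularity, which is already implicit in the paper's standing hypotheses on the mode functions $\{\phi^\pm_i\}$ and on the coherent amplitudes.
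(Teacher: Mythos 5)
Your proposal is correct and follows essentially the same route as the paper: expand $\hat\phi$ in modes via equation \eqref{PhiQuantum}, let $\hat a_i$ and $\hat a_i^\dagger$ act as eigenoperators on the coherent states, and pull $\mathfrak{D}_{\mathsf x}$ through the resulting $c$-number sums. The only cosmetic difference is that the paper invokes the eigenvalue property through the commutator \eqref{eqn:ccrelation} with the displacement operator, whereas you quote it directly from \eqref{coherdef}; these are equivalent since the latter is derived from the former.
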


\begin{proof} This formula can be derived easily using linearity and equations \eqref{PhiQuantum} and \eqref{eqn:ccrelation}: 
    \begin{align}
    \bra{\alpha}\mathfrak{D}_{\mathsf x} \hat \phi(\mathsf x) \ket{\beta} &=  \sumint_i \bigg( \bra{\alpha} [\hat  a_i, \hat D(\beta)] \ket{0} \mathfrak{D}_{\mathsf x}\phi_i^+(\mathsf x)  \ \nonumber\\& \quad \quad -  \sumint_i  \bra{0}[\hat a^{\dagger}_i, \hat D^{\dagger}(\alpha)] \ket{\beta} \ \mathfrak{D}_{\mathsf x} \phi_i^-(\mathsf x) \bigg) \nonumber \\
    &= \braket{\alpha}{\beta} \sumint_i (\beta_i \, \mathfrak{D}_{\mathsf x}\phi_i^+ (\mathsf x)+ \alpha^*_i \, \mathfrak{D}_{\mathsf x} \phi_i^-(\mathsf x)) \nonumber \\ &= \braket{\alpha}{\beta} \mathfrak{D}_{\mathsf x}(\beta(\mathsf x) + \bar{\alpha}(\mathsf x)).
    \end{align}
\end{proof}
Notice that the displacement operator $\hat D(\beta)$ does not depend on $\mathsf x \in \mathcal{M}$; nevertheless, we will keep writing the coherent state as $\ket{\beta}$ where $\beta \in \mathfrak{h}$. Furthermore, for the next theorem, we will remove the direct $\mathsf x$ dependence in $\hat \phi(\mathsf x)$ to reduce clutter.

\begin{theorem} \label{thm:poly}
Let
\begin{align}
    &P[ \hat{\phi},  \nabla^{\mu} \hat{\phi}, \nabla^{\mu}\nabla^{\nu} \hat{\phi}, \Box \hat {\phi} , \cdots ; \nonumber\\
    & \quad \hat \phi', \, \nabla^{\mu'}\phi' \, ,\nabla^{\mu'}\nabla^{\nu'} \hat{\phi'}, \Box \hat {\phi'} , \cdots ;\cdots\, \nonumber]
\end{align}
be a polynomial. 
Then 
    \begin{align}
        &\bra{\alpha}:P[ \hat{\phi},  \nabla^{\mu} \hat{\phi}, \nabla^{\mu}\nabla^{\nu}\hat{\phi}, \Box \hat {\phi}  , \cdots ]:\ket{\beta} =\braket{\alpha}{\beta} \label{theor1} \\
        &\times
         P[ (\beta + \bar{\alpha}),  \nabla^{\mu} (\beta + \bar{\alpha}), \nabla^{\mu}\nabla^{\nu} (\beta + \bar{\alpha}), \Box (\beta + \bar{\alpha})  , \cdots ],\nonumber
    \end{align}
    where $:P:$ means that the operators in $P$ are in normal order, the primes represent evaluation at (in principle) different spacetime points and `` $\cdots$" represents higher order derivatives and other spacetime points.
\end{theorem}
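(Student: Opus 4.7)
The plan is to extend Lemma~\ref{lemma:phi} from a single field factor to a general polynomial by combining linearity with the creation/annihilation decomposition of $\hat\phi$. I would write $\hat\phi = \hat\phi^{(a)} + \hat\phi^{(c)}$ with
\begin{equation}
    \hat\phi^{(a)}(\mathsf{x}) = \sumint_i \hat a_i\,\phi_i^+(\mathsf{x}), \qquad \hat\phi^{(c)}(\mathsf{x}) = \sumint_i \hat a_i^\dagger\,\phi_i^-(\mathsf{x}),
\end{equation}
so that the commutation relation \eqref{eqn:ccrelation} (which gives $\hat a_i\ket{\beta} = \beta_i\ket{\beta}$ and, taking adjoints, $\bra{\alpha}\hat a_i^\dagger = \alpha_i^*\bra{\alpha}$) immediately yields $\hat\phi^{(a)}(\mathsf{x})\ket{\beta} = \beta(\mathsf{x})\ket{\beta}$ and $\bra{\alpha}\hat\phi^{(c)}(\mathsf{x}) = \bar\alpha(\mathsf{x})\bra{\alpha}$. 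Because each differential operator $\mathfrak{D}_{\mathsf{x}}$ acts only on the mode functions (and not on the ladder operators), these identities carry through intact after $\mathfrak{D}_{\mathsf{x}}$ is applied on both sides.

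By linearity of normal ordering and of expectation values, it suffices to prove the statement for a single monomial $M = \mathfrak{D}^{(1)}_{\mathsf{x}_1}\hat\phi(\mathsf{x}_1)\cdots\mathfrak{D}^{(k)}_{\mathsf{x}_k}\hat\phi(\mathsf{x}_k)$. Substituting the decomposition into every factor and invoking the normal-ordering prescription, I would rewrite $:\!M\!:$ as a sum over subsets $S\subseteq\{1,\dots,k\}$ of products in which the factors indexed by $S$ are of creation type and placed to the left, while the remaining factors are of annihilation type and placed to the right. Sandwiching with $\bra{\alpha}$ on the left and $\ket{\beta}$ on the right, each annihilation factor propagates rightwards through $\ket{\beta}$ and yields the c-number $\mathfrak{D}^{(j)}_{\mathsf{x}_j}\beta(\mathsf{x}_j)$, while each creation factor propagates leftwards through $\bra{\alpha}$ and yields $\mathfrak{D}^{(i)}_{\mathsf{x}_i}\bar\alpha(\mathsf{x}_i)$. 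Since both operations leave the coherent states themselves invariant, the overlap $\braket{\alpha}{\beta}$ emerges exactly once.

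Summing over subsets, the distributive identity $\prod_j(x_j+y_j) = \sum_S \prod_{i\in S}y_i\prod_{j\notin S}x_j$ collapses the result into $\braket{\alpha}{\beta}\prod_{j=1}^k \mathfrak{D}^{(j)}_{\mathsf{x}_j}\bigl(\beta(\mathsf{x}_j) + \bar\alpha(\mathsf{x}_j)\bigr)$, which is exactly $M$ after the formal replacement $\hat\phi(\mathsf{x})\to \beta(\mathsf{x}) + \bar\alpha(\mathsf{x})$. Re-assembling the polynomial $P$ from its monomials by linearity then yields equation~\eqref{theor1}. The only real obstacle is the bookkeeping in the middle step: one must verify carefully that the factor $\braket{\alpha}{\beta}$ is produced only once rather than once per field operator. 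This is precisely what normal ordering buys here, since it guarantees that no operator is ever stranded between the bra and the ket during the sequence of eigenvalue reductions; beyond this, the argument is a repeated application of Lemma~\ref{lemma:phi}.
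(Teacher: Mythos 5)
Your proof is correct and follows essentially the same route as the paper: both split $\hat\phi$ into its annihilation and creation parts, use the eigenvalue relations $\hat\phi^{(a)}\ket{\beta}=\beta\ket{\beta}$ and $\bra{\alpha}\hat\phi^{(c)}=\bar\alpha\bra{\alpha}$ on each normal-ordered term, and then recombine via distributivity into $P$ evaluated at $\beta+\bar\alpha$ times a single factor of $\braket{\alpha}{\beta}$. Your subset-sum bookkeeping is just a more explicit version of the paper's decomposition $:\!P\!:=\sum_k P_k^+ P_k^-$, and you have in fact written the annihilation/creation split with the mode functions correctly paired (annihilation with $\phi_i^+$, creation with $\phi_i^-$), consistent with equation~\eqref{PhiQuantum}.
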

\begin{proof}
    For notational simplicity in the proof, let us consider first the case where all operators in the polynomial are evaluated at the same spacetime point. At first, we want to split the field operator $\hat \psi$ into two different parts, i.e., annihilation and creation. To do so, we define two operators:
    \begin{eqnarray}
        \hat \psi^+ = \sumint_i \hat{a}^{\dagger}_i \phi^+_i,\, \quad \text{and} \quad  \hat \psi^- = \sumint_i \hat{a}_i \phi^-_i,
    \end{eqnarray}
    following the notation defined in Section \ref{sec:setup}.
    From Lemma \ref{lemma:phi} we see  that 
    \begin{equation}
        \bra{\alpha} \hat \psi^+  = \bra{\alpha} \bar{\alpha}  \quad \text{and} \quad \hat \psi^- \ket{\beta} =  \beta \ket{\beta}.
        \label{eq:eigpsi}
    \end{equation} 
    Moreover, from the properties of normal ordering, we can write down the following:
    \begin{align}
        &:P[ \hat{\phi},  \nabla^{\mu} \hat{\phi},\nabla^{\mu}\nabla^{\nu} \hat{\phi}, \Box \hat {\phi}, \cdots   ]: \nonumber \\  &= \sum_{k \ge 0} P^+_k[ \hat{\psi}^+,  \nabla^{\mu} \hat{\psi}^+,\nabla^{\mu}\nabla^{\nu} \hat{\psi}^+, \Box \hat {\psi}^+ , \cdots  ] \nonumber \\
        & \quad \quad \times P^-_k[ \hat{\psi}^-,  \nabla^{\mu} \hat{\psi}^-,\nabla^{\mu}\nabla^{\nu} \hat{\psi}^-, \Box \hat {\psi}^- , \cdots  ].
        \label{eq:PPPeqn}
    \end{align}
    Here, $P^+_k$ and $P^-_k$ are two different polynomials of $\hat{\psi}^+$ and $\hat{\psi}^-$ (and their derivatives), respectively.
    Since $\hat{\psi}^-$ and its derivatives commute with each other, 
    \begin{align}
        & P^-_k[\hat{\psi}^-,  \nabla^{\mu} \hat{\psi}^-,\nabla^{\mu}\nabla^{\nu} \hat{\psi}^-, \Box \hat {\psi}^-,  \cdots ] \nonumber \\
        &= \sum_{n_1, n_2, n_3, n_4 , \cdots }    \, ^{(k)}C^-_{n_1,n_2,n_3,n_4, \cdots } \left(\hat{\psi}^-\right)^{n_1}    \nonumber \\
        &\quad \quad \times \left(\nabla^{\mu} \hat{\psi}^-\right)^{n_2} \left(\nabla^{\mu}\nabla^{\nu} \hat{\psi}^-\right)^{n_3} \left(\Box  {\hat{\psi}^-}\right)^{n_4} \cdots ,
    \end{align}
    where each $n_i  \in \{0,1,\dots\}$ and the $^{(k)}C^-_{n_1,n_2,n_3,n_4, \cdots}$ coefficient can be a function of the metric tensor $g_{\mu \nu}$, scalar field mass $m$, coupling to curvature $\zeta$, etc. Next, we compute the action of the polynomial of operators   $P^-_k$ on $\ket{\beta}$. From equation \eqref{eq:eigpsi} and Lemma \ref{lemma:phi} we see that
    \begin{align}
        & P^-_k[\hat{\psi}^-,  \nabla^{\mu} \hat{\psi}^-,\nabla^{\mu}\nabla^{\nu} \hat{\psi}^-, \Box \hat {\psi}^-, \cdots] \ket{\beta} \nonumber\\
        =& \sum_{n_1, n_2, n_3, n_4, \cdots} \, ^{(k)}C^-_{n_1,n_2,n_3,n_4, \cdots} \left(\beta\right)^{n_1}  \nonumber \\
        &\quad \times \left(\nabla^{\mu} \beta\right)^{n_2} \left(\nabla^{\mu}\nabla^{\nu} \beta\right)^{n_3} \left(\Box  {\beta}\right)^{n_4} \cdots\ket{\beta} .
        \label{eq:Pminus}
    \end{align}
    The same argument goes for the polynomial $P^+_k$ acting (from the left) on $\ket\alpha$: 
    \begin{align}
        & \bra{\alpha} P^+_k[\hat{\psi}^+,  \nabla^{\mu} \hat{\psi}^+,\nabla^{\mu}\nabla^{\nu} \hat{\psi}^+, \Box \hat {\psi}^+, \cdots] \nonumber \\
        &= \bra{\alpha} \sum_{n_1, n_2, n_3, n_4} \, ^{(k)}C^+_{n_1,n_2,n_3,n_4} \left(\bar{\alpha}\right)^{n_1} \nonumber \\
        &\quad \times \left(\nabla^{\mu} \bar{\alpha}\right)^{n_2} \left(\nabla^{\mu}\nabla^{\nu} \bar{\alpha}\right)^{n_3} \left(\Box  {\bar{\alpha}}\right)^{n_4}\cdots
        \label{eq:Pplus}
    \end{align}
    From equations \eqref{eq:Pminus} and \eqref{eq:Pplus},  we see that all $\hat \psi^+$ and $\hat \psi^-$ are just getting replaced by $\bar{\alpha}$ and $\beta$ respectively within the polynomials. Hence,
    \begin{align}
        & \bra{\alpha} :P[ \hat{\phi},  \nabla^{\mu} \hat{\phi},\nabla^{\mu}\nabla^{\nu} \hat{\phi}, \Box \hat {\phi}  , \cdots]: \ket{\beta} \nonumber \\
        &= \braket{\alpha}{\beta} \sum_{k\geq0}   P^+_k[ \bar{\alpha},  \nabla^{\mu} \bar{\alpha},\nabla^{\mu}\nabla^{\nu} \bar{\alpha}, \Box \bar{\alpha}  , \cdots] \nonumber \\
        &\quad \quad \times P^-_k[ \beta,  \nabla^{\mu} \beta,\nabla^{\mu}\nabla^{\nu} \beta, \Box \beta, \cdots  ].
        \label{eq:alphbetaP}
    \end{align}
    We can rearrange the right-hand side of the equation \eqref{eq:alphbetaP} so that 
    the sum of all the polynomials can be written as a single polynomial, which is the polynomial $P$ after replacing $\hat \phi$ by $ \bar{\alpha}+ \beta$. This is possible because the right-hand side of the formula originated from $\hat \phi = \hat \psi^+ + \hat \psi^-$. Consequently, we derive Theorem \ref{thm:poly}. 
\end{proof}

The proof of the theorem would be the same if we were using the more general polynomial
\begin{align}
     P[\hat \phi, \nabla^{\mu}\phi, \, \cdots \, ; \hat \phi', \, \nabla^{\mu'}\phi' \, , \cdots \, ; \hat \phi'', \, \nabla^{\mu''}\phi'' \, , \cdots\, ; \cdots].
    \label{eq:generalizedthm}
\end{align}
The matrix element $\bra{\alpha} P \ket{\beta}$ can be easily found by substituting $\phi \to \bar \alpha + \beta$, $\phi'\to\bar \alpha' + \beta'$, and so on and so forth. 

\begin{corollary} \label{corollary1}
    \begin{align}
        &\bra{\alpha} :\hat{T}^{\mu \nu}: \ket{\beta} \nonumber \\
        &= \bigg(\nabla^{\mu} (\bar{\alpha}+ \beta) \nabla^{\nu} (\bar{\alpha}+ \beta) - \frac{1}{2}g^{\mu \nu}\nabla^{\rho} (\bar{\alpha}+ \beta) \nabla_{\rho} (\bar{\alpha}+ \beta) \nonumber \\
        &+ \left( \zeta \left(G^{\mu \nu}   + \nabla^{\mu} \nabla^{\nu}\right) - \frac{1}{2}g^{\mu \nu} (m^2+\zeta \Box)   \right) (\bar{\alpha}+ \beta)^2\bigg) \braket{\alpha}{\beta}.
        \label{eq:alphaTbeta}
    \end{align}
    % where $\bar{\alpha}+ \beta$ are the short form of $\bar{\alpha}(\mathsf x)+ \beta(\mathsf x)$.
\end{corollary}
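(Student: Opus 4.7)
The plan is to recognize that Corollary \ref{corollary1} is essentially a direct specialization of Theorem \ref{thm:poly} to the particular polynomial that defines the stress-energy tensor, so the work reduces to identifying the right polynomial $P$ and performing the prescribed substitution $\hat\phi \to \bar\alpha + \beta$.

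First I would rewrite the classical stress-energy tensor from equation \eqref{eq:Tmunu} as a polynomial in $\hat\phi$, $\nabla^\mu\hat\phi$, $\nabla^\mu\nabla^\nu\hat\phi$, and $\Box\hat\phi$. Concretely, the quadratic terms $\nabla^\mu\hat\phi\,\nabla^\nu\hat\phi$ and $\nabla^\rho\hat\phi\,\nabla_\rho\hat\phi$ already fit the form $P[\hat\phi, \nabla^\mu\hat\phi, \dots]$, while the terms $\nabla^\mu\nabla^\nu\hat\phi^2$ and $\Box\hat\phi^2$ are handled either by distributing the derivatives (writing e.g. $\nabla^\mu\nabla^\nu\hat\phi^2 = 2\nabla^{(\mu}\hat\phi\,\nabla^{\nu)}\hat\phi + 2\hat\phi\,\nabla^\mu\nabla^\nu\hat\phi$) or by treating $\nabla^\mu\nabla^\nu$ and $\Box$ as external linear operators $\mathfrak{D}_{\mathsf x}$ acting on the polynomial $\hat\phi^2$, both of which are admissible under the hypotheses of Theorem \ref{thm:poly}.

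Next I would apply Theorem \ref{thm:poly} with $\ket{\alpha}$ on the left and $\ket{\beta}$ on the right. The theorem tells me that the matrix element of the normal-ordered polynomial equals $\braket{\alpha}{\beta}$ times the same polynomial with every occurrence of $\hat\phi$ replaced by $\bar\alpha + \beta$ and every derivative acting directly on $\bar\alpha + \beta$. The Einstein tensor $G^{\mu\nu}$ and the mass $m^2$ appear only as $\mathbb{C}$-number coefficients multiplying $\hat\phi^2$, so they factor through unchanged. This yields an intermediate expression of the form
\begin{align}
\bra{\alpha}\!:\!\hat T^{\mu\nu}\!:\!\ket{\beta} &= \braket{\alpha}{\beta}\bigl[\nabla^\mu(\bar\alpha{+}\beta)\nabla^\nu(\bar\alpha{+}\beta) \nonumber\\
&\quad - \tfrac{1}{2}g^{\mu\nu}\nabla^\rho(\bar\alpha{+}\beta)\nabla_\rho(\bar\alpha{+}\beta) - \tfrac{1}{2}g^{\mu\nu}m^2(\bar\alpha{+}\beta)^2 \nonumber\\
&\quad + \zeta(G^{\mu\nu} - g^{\mu\nu}\Box + \nabla^\mu\nabla^\nu)(\bar\alpha{+}\beta)^2\bigr].
\end{align}

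Finally I would collect the terms multiplying $(\bar\alpha+\beta)^2$ into a single bracket, grouping the $G^{\mu\nu}$ and $\nabla^\mu\nabla^\nu$ contributions on one side and the $g^{\mu\nu}$ contributions (mass and $\Box$) on the other, to match the stated form of equation \eqref{eq:alphaTbeta}. I do not expect any real obstacle here, since Theorem \ref{thm:poly} is already tailor-made for this kind of substitution; the only subtlety is bookkeeping on how $\Box$ and $\nabla^\mu\nabla^\nu$ act on the composite object $(\bar\alpha+\beta)^2$ versus on $\hat\phi^2$, but this is transparent once one uses the fact, implicit in the theorem's derivation, that $\mathfrak{D}_{\mathsf x}$ commutes through the coherent-state matrix element.
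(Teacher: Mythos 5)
Your proposal is correct and follows exactly the paper's own (one-line) proof: apply Theorem~\ref{thm:poly} to the operator version of equation~\eqref{eq:Tmunu} and substitute $\hat\phi \to \bar\alpha+\beta$, with the curvature and mass terms passing through as $c$-number coefficients. The only caveat is in the final regrouping: your intermediate expression faithfully carries $\zeta(G^{\mu\nu}-g^{\mu\nu}\Box+\nabla^\mu\nabla^\nu)(\bar\alpha+\beta)^2$ from equation~\eqref{eq:Tmunu}, whereas the stated form of equation~\eqref{eq:alphaTbeta} writes the $\Box$ contribution as $-\tfrac{1}{2}\zeta g^{\mu\nu}\Box(\bar\alpha+\beta)^2$, a factor-of-two discrepancy that lies in the paper's statement rather than in your argument.
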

\begin{proof}
    It is a straightforward consequence of applying Theorem \ref{thm:poly} over the operator version of the equation \eqref{eq:Tmunu}.
\end{proof}
    To reduce clutter, in the following we will write $\hat T'_{\sigma' \rho'}$ instead of $\hat T_{\sigma' \rho'}(\mathsf x')$. Thus, we will omit the explicit spacetime dependence of the operators.
    
\begin{corollary} \label{corollary2}
    \begin{equation}
        \bra{\alpha}: \hat T_{\mu \nu} \hat T'_{\sigma' \rho'} :\ket{\alpha} = \bra{\alpha} :\hat T_{\mu \nu}: \ket{\alpha} \bra{\alpha}: \hat T'_{\sigma' \rho'}: \ket{\alpha}.
        \label{eq:TTequation}
    \end{equation}
\end{corollary}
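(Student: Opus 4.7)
The plan is to apply the multi-point generalization of Theorem~\ref{thm:poly} indicated by equation~\eqref{eq:generalizedthm} directly to the composite operator $\hat T_{\mu\nu}\,\hat T'_{\sigma'\rho'}$, which is manifestly a polynomial in the field and its derivatives at the two spacetime points $\mathsf x$ and $\mathsf x'$, with coefficients depending only on $g_{\mu\nu}$, $m$, and $\zeta$ (as read off from equation~\eqref{eq:Tmunu}).

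First I would set $\beta = \alpha$ in the multi-point theorem so that the overlap factor $\braket{\alpha}{\alpha}$ is simply $1$, and the replacement rule gives $\hat\phi(\mathsf x)\mapsto \bar\alpha(\mathsf x)+\alpha(\mathsf x)$ and $\hat\phi(\mathsf x')\mapsto \bar\alpha(\mathsf x')+\alpha(\mathsf x')$ uniformly inside the polynomial. This immediately yields
\begin{equation}
\bra{\alpha}:\!\hat T_{\mu\nu}\,\hat T'_{\sigma'\rho'}\!:\ket{\alpha} = T_{\mu\nu}[\bar\alpha+\alpha](\mathsf x)\,T_{\sigma'\rho'}[\bar\alpha+\alpha](\mathsf x'),
\end{equation}
where $T_{\mu\nu}[\,\cdot\,]$ denotes the classical stress-energy functional of equation~\eqref{eq:Tmunu}. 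The key observation justifying the factorized form on the right-hand side is that the classical $T_{\mu\nu}$ is local: it depends only on $\phi$ and its derivatives at the single spacetime point where it is evaluated, so the classical polynomial automatically splits as a product between contributions at $\mathsf x$ and at $\mathsf x'$.

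Next I would invoke Corollary~\ref{corollary1} with $\beta = \alpha$ to recognize each of these factors as a one-point matrix element, namely $T_{\mu\nu}[\bar\alpha+\alpha](\mathsf x) = \bra{\alpha}:\!\hat T_{\mu\nu}\!:\ket{\alpha}$ and analogously at $\mathsf x'$ (again using $\braket{\alpha}{\alpha} = 1$). Substituting these identifications back into the previous display produces exactly equation~\eqref{eq:TTequation}.

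The only step requiring genuine care---and what I expect to be the main (mild) obstacle---is justifying that Theorem~\ref{thm:poly} does extend cleanly to polynomials involving fields at multiple distinct spacetime points. This extension is claimed without detailed proof in equation~\eqref{eq:generalizedthm}, and it holds because the proof of Theorem~\ref{thm:poly} rested only on the splitting $\hat\phi = \hat\psi^+ + \hat\psi^-$ together with the eigenvalue relations~\eqref{eq:eigpsi}. Since $\hat\psi^+(\mathsf x)$ and $\hat\psi^+(\mathsf x')$ (respectively $\hat\psi^-(\mathsf x)$ and $\hat\psi^-(\mathsf x')$) are each built only from creation (resp.\ annihilation) operators, they still commute within their own $\pm$ sector even when evaluated at different spacetime points, so the same combinatorial rearrangement used in the single-point proof goes through unchanged.
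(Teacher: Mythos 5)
Your proposal is correct and follows essentially the same route as the paper: apply the multi-point version of Theorem~\ref{thm:poly} with $\beta=\alpha$, note that the resulting classical polynomial factorizes into a product of local contributions at $\mathsf x$ and $\mathsf x'$, and identify each factor via Corollary~\ref{corollary1}. The paper compresses this into one sentence, whereas you also spell out why the multi-point extension is legitimate (commutativity within each $\pm$ sector across spacetime points), which is a worthwhile clarification but not a different argument.
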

\begin{proof}
    By applying Theorem \ref{thm:poly}, we see that the polynomial inside normal ordering is in factorized form; hence, the right-hand side must be in the factored form.
\end{proof}

As a consequence, we derived that coherent states satisfy the Kuo-Ford criterion in globally hyperbolic stationary spacetime.

\begin{proposition}\label{proposition1}
Let $\kittyket = a \ket{\alpha} + b \ket{-\alpha}$ and $\braket{\alpha}{-\alpha} =\epsilon$, with $\epsilon\ll 1$. 
Then
\begin{eqnarray}
    \kittybra: \hat T_{\mu \nu}(\mathsf{x}):\kittyket &=& \bra{\alpha}:\hat T_{\mu \nu}:\ket{\alpha}+\mathcal{O}(\epsilon),
    \label{eq:claim4}\\
    \kittybra:\hat T_{\mu \nu} \hat T'_{\sigma' \rho'}:\kittyket &=& \bra{\alpha}:\hat T_{\mu \nu}\hat T'_{\sigma' \rho'}:\ket{\alpha}+\mathcal{O}(\epsilon), \, \label{eq:claim4b} 
\end{eqnarray}
and, in general, 
\begin{equation}
\kittybra: P[\hat T_{\mu \nu}]:\kittyket = \bra{\alpha}:\hat P[T_{\mu \nu}]:\ket{\alpha}+\mathcal{O}(\epsilon),
    \label{eq:claim4c}
\end{equation}
where $P[\hat T_{\mu \nu}]$ is any polynomial of $\hat T_{\mu\nu}$ (even if each instance of $\hat T_{\mu\nu}$ is evaluated at different spacetime points).
\end{proposition}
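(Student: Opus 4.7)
The plan is to expand the cat-state matrix element into four coherent-state matrix elements and apply Theorem \ref{thm:poly} (equivalently, iterate Corollary \ref{corollary1}) to each one. Writing $\kittyket = a\ket{\alpha}+b\ket{-\alpha}$ gives
\begin{align*}
    \kittybra\!:\!P[\hat T_{\mu\nu}]\!:\!\kittyket
    =&\;|a|^2\bra{\alpha}\!:\!P\!:\!\ket{\alpha}+|b|^2\bra{-\alpha}\!:\!P\!:\!\ket{-\alpha}\\
    &+ a^*b\bra{\alpha}\!:\!P\!:\!\ket{-\alpha}+ab^*\bra{-\alpha}\!:\!P\!:\!\ket{\alpha}.
\end{align*}
Since the three identities in the proposition correspond to $P=\hat T_{\mu\nu}$, $P=\hat T_{\mu\nu}\hat T'_{\sigma'\rho'}$, and $P$ an arbitrary polynomial of (possibly point-split) stress-energy operators, it suffices to establish the general statement once and read off the first two as special cases.

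For the diagonal contributions, Theorem \ref{thm:poly} replaces $\hat\phi$ by the classical profile $\bar\alpha+\alpha$ in the first diagonal term and by $-(\bar\alpha+\alpha)$ in the second. Because every factor of $\hat T_{\mu\nu}$ is quadratic in $\hat\phi$, the polynomial $P[\hat T_{\mu\nu}]$ is even under $\hat\phi\to-\hat\phi$, so the two diagonal terms coincide. For the off-diagonal terms, Theorem \ref{thm:poly} pulls out an explicit overlap factor $\braket{\alpha}{-\alpha}=\epsilon$ (respectively $\epsilon^*$) multiplying the polynomial evaluated on the mismatched classical profile $\bar\alpha-\alpha$ (respectively $-\bar\alpha+\alpha$); each cross term is therefore proportional to $\epsilon$. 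Combining this with the cat-state normalization $|a|^2+|b|^2+2\mathrm{Re}(a^*b\,\epsilon)=1$, which forces $|a|^2+|b|^2=1+\mathcal O(\epsilon)$, collapses the four-term sum to $\bra{\alpha}\!:\!P\!:\!\ket{\alpha}+\mathcal O(\epsilon)$, as claimed.

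The main subtlety I expect is the interpretation of the $\mathcal O(\epsilon)$ symbol in the cross terms. The polynomial appearing there is evaluated on $\bar\alpha-\alpha$ and so can itself grow polynomially in the coherent amplitudes, whereas $\epsilon=\braket{\alpha}{-\alpha}$ is exponentially suppressed in $|\alpha|$. The implicit statement is therefore that, for fixed $a,b$ and in the distinguishability regime of large $|\alpha|$, the polynomial prefactor of the cross term is dominated by the exponential smallness of $\epsilon$. I would state this asymptotic reading explicitly before invoking the $\mathcal O(\epsilon)$ estimates, so that the proposition is understood as a bound on the ratio between cross-term and diagonal contributions rather than a uniform bound on the cross term itself.
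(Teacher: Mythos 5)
Your proposal is correct and follows essentially the same route as the paper: expand the cat-state matrix element into the four coherent-state terms, use the quadratic (hence $\hat\phi\to-\hat\phi$ even) structure of $\hat T_{\mu\nu}$ to identify the two diagonal terms, observe that Theorem~\ref{thm:poly} makes each cross term proportional to the overlap $\epsilon$, and absorb the normalization. Your closing remark on reading $\mathcal{O}(\epsilon)$ asymptotically—since the polynomial prefactor of the cross terms grows only polynomially in the coherent amplitude while $\epsilon$ is exponentially suppressed—is a fair caveat that the paper leaves implicit.
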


\textbf{Proof:}
% Notice, from \eqref{eq:alphaTbeta}, \eqref{eq:TTequation} and \eqref{theor1} that
\begin{align}
\kittybra:\hat T_{\mu \nu}:\kittyket &= |a|^2 \bra{\alpha}:\hat T_{\mu \nu}:\ket{\alpha} \nonumber \\ &+ |b|^2 \bra{-\alpha}:\hat T_{\mu \nu}:\ket{-\alpha} \nonumber \\ &+ \epsilon\,  a^*b \bra{\alpha}:\hat T_{\mu \nu}:\ket{-\alpha} \nonumber \\ &+ \epsilon\,  ab^* \bra{-\alpha}:\hat T_{\mu \nu}:\ket{\alpha}.
\label{eq:claim4general}
\end{align}
Now, from equation \eqref{eq:alphaTbeta} one can see that
\begin{align}
    \bra{-\alpha}:\hat T_{\mu \nu}:\ket{-\alpha} =\bra{\alpha}:\hat T_{\mu \nu}:\ket{\alpha}, 
    \label{eq:claim4eq3} \\
    \bra{-\alpha}:\hat T_{\mu \nu}:\ket{\alpha} =\bra{\alpha}:\hat T_{\mu \nu}:\ket{-\alpha},\label{eq:claim4eq4}
\end{align}
because $\overline{-\alpha} = \sumint -\alpha_i^* \phi^-_i = -\bar{\alpha}$, and we can recover equation \eqref{eq:claim4}. 

We now consider the case of quadratic terms involving stress-energy tensors. The key formula we need here is 
\begin{align}
    \bra{-\alpha}:\hat T_{\mu \nu} \hat T'_{\mu' \nu'} :\ket{-\alpha} =\bra{\alpha}:\hat T_{\mu \nu} \hat T'_{\mu' \nu'} :\ket{\alpha}, 
    \label{eq:claim4eq5} \\
    \bra{-\alpha}:\hat T_{\mu \nu} \hat T'_{\mu' \nu'} :\ket{\alpha} =\bra{\alpha}:\hat T_{\mu \nu} \hat T'_{\mu' \nu'} :\ket{-\alpha}.\label{eq:claim4eq6}
\end{align}
For illustration, we sketch the proof of the \eqref{eq:claim4eq6}; \eqref{eq:claim4eq5} follows the same proof. We introduce the notation:
\begin{equation} \label{eq:notationTmunu}
    T_{\mu \nu}[\alpha, \beta]  = \bra{\alpha}\hat T_{\mu \nu}\ket{\beta} /\braket{\alpha}{\beta}.
\end{equation}
In other words,  the expression for $T_{\mu \nu}[\alpha, \beta]$ is given by equation \eqref{eq:alphaTbeta} without the inner product. From equation \eqref{eq:alphaTbeta} and using the fact that $\braket{\alpha}{-\alpha}$ is real, one can check that
\begin{align}
    T_{\mu \nu}[\alpha, -\alpha]  = T_{\mu \nu}[-\alpha, \alpha] \\
    T_{\mu \nu}[-\alpha, -\alpha]  = T_{\mu \nu}[\alpha, \alpha].
\end{align}
Using Theorem \ref{thm:poly},
\begin{align}
   \nonumber \bra{-\alpha}:\hat T_{\mu \nu} \hat T'_{\mu' \nu'} :\ket{\alpha} &=  \braket{-\alpha}{\alpha}T_{\mu \nu}[-\alpha,\alpha]T'_{\mu' \nu'}[-\alpha,\alpha]  \\
 \nonumber   &= \braket{-\alpha}{\alpha} T_{\mu \nu}[\alpha,-\alpha]T'_{\mu' \nu'}[\alpha,-\alpha]  \\
    & = \bra{\alpha}:\hat T_{\mu \nu} \hat T'_{\mu' \nu'} :\ket{-\alpha}.
\end{align}
Using the equations \eqref{eq:claim4eq5} and \eqref{eq:claim4eq6}, we can employ the same steps used in the proof for $\kittybra:\hat T_{\mu \nu}:\kittyket$ to prove the result for $\kittybra:\hat T_{\mu \nu}\hat T'_{\mu' \nu'}:\kittyket$. The general case can be proven using similar steps for the polynomial $P[T_{\mu \nu}]$.

% The derivation of equation \eqref{eq:claim4b} is direct by using~\eqref{eq:TTequation} and Equations \eqref{eq:claim4eq3} and \eqref{eq:claim4eq4}. The same applies for the general case~\eqref{eq:claim4c} using the general form~\eqref{eq:claim4general}.

% \textbf{Remark:} If we choose $a=\cos{\theta}$ and $b=\ii \sin{\theta}$ for any angle $\theta$, then the term of order $\mathcal{O}(\epsilon)$ vanishes exactly due to the fact that $\bra{\alpha}\ket{-\alpha}\in \mathbb{R}$.

% \begin{proposition}\label{proposition2} Let $\kittyket = a \ket{\alpha} + b \ket{-\alpha}$ and $\braket{\alpha}{-\alpha} =\epsilon$, with $\epsilon\ll 1$. 
% Then
% \begin{align}
%     &\kittybra:\hat T_{\mu \nu}
%  \hat T'_{\sigma' \tau'}:\kittyket  \\ 
%     &\quad= \kittybra:\hat T_{\mu \nu}:\kittyket   \nonumber  \kittybra:\hat T'_{\sigma' \tau'}:\kittyket+\mathcal{O}(\epsilon)
% \end{align}
% \end{proposition}

% \textbf{Proof:}
% The proof is similar to the one for Proposition \ref{proposition1} once one makes use of Corollary \ref{corollary2}.

\begin{theorem} \label{thm:cat}
    In a globally hyperbolic stationary spacetime, for the cat state in equation \eqref{eq:catt}, the (normal-ordered) uncertainty of the stress-energy tensor, as well as all its higher order central moments (that we notate $\mu_n, n \ge 2$) vanish
    \begin{itemize}
        \item up to $\mathcal{O}(\epsilon)$, if  $\braket{\alpha}{-\alpha} =\epsilon$, or,
        \item exactly, if $a = \cos{\theta}$ and $b = \ii\sin{\theta}$, for $\theta \in [0,2\pi)$.
    \end{itemize}
\end{theorem}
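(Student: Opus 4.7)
The plan is to reduce everything to expectations of normal-ordered monomials $\langle :\hat T_{\mu_1\nu_1}\cdots \hat T_{\mu_n\nu_n}: \rangle$ and then apply the machinery of Theorem \ref{thm:poly} and Proposition \ref{proposition1}. With $c_i \coloneqq \langle :\hat T_{\mu_i\nu_i}(\mathsf{x}_i): \rangle$ a c-number, normal ordering passes through the subtractions, so the $n$-th central moment
\begin{equation}
\mu_n = \langle :(\hat T_{\mu_1\nu_1}-c_1)\cdots(\hat T_{\mu_n\nu_n}-c_n): \rangle
\end{equation}
expands as a fixed polynomial (with binomial-type coefficients) in the $k$-point moments for $k \le n$. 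Hence it suffices to control each $\kittybra :\hat T\cdots \hat T: \kittyket$ separately.

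First I would fix the coherent-state baseline. Applying Theorem \ref{thm:poly} to $P = \hat T_{\mu_1\nu_1}\cdots \hat T_{\mu_n\nu_n}$ factorizes the $n$-point moment on $\ket{\alpha}$ into a product of one-point moments, $\bra{\alpha}:\hat T_{\mu_1\nu_1}\cdots \hat T_{\mu_n\nu_n}:\ket{\alpha} = \prod_i \bra{\alpha}:\hat T_{\mu_i\nu_i}:\ket{\alpha}$. Substituting into the expansion of $\mu_n$ collapses it to $(\prod_i c_i)(1-1)^n = 0$ by the binomial theorem, so on a coherent state every central moment with $n \ge 1$ vanishes identically. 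The task therefore reduces to showing that each cat-state moment equals the corresponding coherent one.

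Second, I would expand $\kittybra :\hat T_{\mu_1\nu_1}\cdots \hat T_{\mu_n\nu_n}: \kittyket$ into its four matrix elements and extend the parity argument behind equations \eqref{eq:claim4eq3}--\eqref{eq:claim4eq6} to the $n$-point case. Since each $\hat T_{\mu_i\nu_i}$ is quadratic (hence even) in $\hat\phi$, Theorem \ref{thm:poly} shows the normal-ordered monomial is invariant under $\alpha \leftrightarrow -\alpha$, so the two diagonal matrix elements coincide and so do the two off-diagonal ones, giving
\begin{align}
\kittybra :\hat T\cdots \hat T: \kittyket &= (|a|^2+|b|^2)\bra{\alpha}:\hat T\cdots \hat T:\ket{\alpha} \nonumber \\
&\quad + \epsilon(a^*b + ab^*)\bra{\alpha}:\hat T\cdots \hat T:\ket{-\alpha}.
\end{align}
In the general branch, normalization forces $|a|^2+|b|^2 = 1 + \mathcal{O}(\epsilon)$ and the cross term carries an explicit factor of $\epsilon$, so each moment differs from its coherent-state value by $\mathcal{O}(\epsilon)$ and $\mu_n = \mathcal{O}(\epsilon)$. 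In the exact branch, $a=\cos\theta$ and $b=\ii\sin\theta$ give $|a|^2+|b|^2 = 1$ exactly and $a^*b+ab^* = 2\operatorname{Re}(\ii\sin\theta\cos\theta) = 0$, which kills the cross term and makes each cat moment equal to the corresponding coherent-state moment exactly, so $\mu_n = 0$ identically.

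The main obstacle I anticipate is purely bookkeeping: extending the parity argument of Proposition \ref{proposition1} cleanly to $n$ insertions at arbitrary distinct spacetime points (which Theorem \ref{thm:poly} handles uniformly, since only the substitution $\alpha \leftrightarrow -\alpha$ is ever required) and, in the approximate branch, being careful not to try to improve the $\mathcal{O}(\epsilon)$ estimate, since the normalization correction and the cross-term prefactor contribute at the same order and do not generically cancel.
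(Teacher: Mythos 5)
Your proposal is correct and follows essentially the same route as the paper: reduce central moments to raw $n$-point moments, use the factorization from Theorem \ref{thm:poly} on coherent states, transfer to the cat state via the parity argument of Proposition \ref{proposition1}, and collapse the central moment by the binomial identity $(1-1)^n=0$, with the cross terms either $\mathcal{O}(\epsilon)$ or exactly cancelled because $a^*b+ab^*=0$ for $a=\cos\theta$, $b=\ii\sin\theta$. The only cosmetic difference is that you work with the unsymmetrized central moment $\langle:(\hat T_{\mu_1\nu_1}-c_1)\cdots(\hat T_{\mu_n\nu_n}-c_n):\rangle$ while the paper uses the $\mathcal{P}$-symmetrized definition in equation \eqref{eq:nthmoment}; the argument is unaffected.
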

\begin{proof}
Let us introduce the definition of the $n$th order moment $\mu_n$ of stress-energy tensor for some state $\ket{\psi}$ as follows:
\begin{align} \label{eq:nthmoment}
    \mu_{n}(x,x',\cdots,x^{(n)}) &= \sum^n_{m=0}  (-1)^{n-m} \nonumber \binom{n}{m}\\ 
    &\times \frac{1}{n!}\mathcal{P}\big[\tilde{\mu}_{m}(x,x',\cdots,x^{(m)}) \nonumber \\
    &\times\tilde{\mu}(x^{(m+1)})\times\cdots\times\tilde{\mu}(x^{(n)}) \big],
\end{align}
where 
\begin{eqnarray}
    \label{eq:higherMoment}
    \tilde{\mu}_1(x) &=& \bra{\psi}:\hat{T}_{\alpha \beta}:\ket{\psi}, \\
    \tilde{\mu}_2(x,x') &=& \bra{\psi}:\hat{T}_{\alpha \beta} \ \hat{T'}_{\alpha' \beta'}   :\ket{\psi},\\
    \tilde{\mu}_3(x,x',x'') &=& \bra{\psi}:\hat{T}_{\alpha \beta} \! \hat{T'}_{\alpha' \beta'}   \! \hat{T''}_{\alpha'' \beta''} \!\! :\ket{\psi}\!, \\
    \cdots \  &\cdots& \ \cdots \ \cdots\nonumber \\
    \tilde{\mu}_m(x,\cdots,x^{(m)}) &=& \bra{\psi}:\hat{T}_{\mu \nu}  \cdots \hat{T}^{(m)}_{\mu^{(m)} \nu^{(m)}} :\ket{\psi}\!,\label{eq:psi_mu_n}
\end{eqnarray}
$\tilde{\mu}_0=1$, $\tilde{\mu} = \tilde{\mu}_1 $, and the operator $\mathcal{P}$ produces all the possible permutations of the spacetime coordinates $x,x',x'', \cdots, x^{(n)}$ of the functions inside its argument and add them together. For example, 
\begin{align}\mathcal{P}[ O(x,x')  P(x'')] &=  O(x,x') P(x'')+  O(x',x) P(x'') \nonumber \\
&+ O(x,x'') P(x')+  O(x'',x) P(x') \nonumber \\
&+ O(x'',x') P(x)+  O(x',x'') P(x).
\end{align}
If we put $\ket{\psi} = \kittyket$ in equation \eqref{eq:psi_mu_n}, then
    \begin{align} \label{eq:hat_mu_m}
        \tilde{\mu}_m(x,x',\cdots,x^{(m)}) &= \kittybra:\hat{T}_{\mu \nu}  \cdots \hat{T}^{(m)}_{\mu^{(m)} \nu^{(m)}} :\kittyket.
    \end{align}
    Applying Proposition \ref{proposition1}, we obtain
    \begin{align}
        \tilde{\mu}_m(x,x',\cdots,x^{(m)}) = 
        & \bra{\alpha}:\hat{T}_{\rho \sigma}  \cdots \hat{T}^{(m)}_{\rho^{(m)} \sigma^{(m)}} \!\! :\ket{\alpha} \nonumber \\ &+\mathcal{O}(\epsilon).
    \end{align}
    Now, using the notation in equation \eqref{eq:notationTmunu} and according to Theorem \ref{thm:poly}, 
    \begin{align}
        \bra{\alpha}:&\hat{T}_{\rho \sigma}  \cdots  \nonumber \hat{T}^{(m)}_{\rho^{(m)} \sigma^{(m)}} \!\! :\ket{\alpha} \\
        & = {T}_{\rho \sigma}[\alpha,\alpha] \times \cdots \times{T}^{(m)}_{\rho^{(m)} \sigma^{(m)}} [\alpha,\alpha] \\
        & = \bra{\alpha}:\hat{T}_{\rho \sigma}  :\ket{\alpha} \times \cdots  \times \bra{\alpha}: \hat{T}^{(m)}_{\rho^{(m)} \sigma^{(m)}} \!\! :\ket{\alpha} \\
        & = \kittybra:\hat{T}_{\rho \sigma}  :\kittyket \times \cdots  \times \kittybra: \hat{T}^{(m)}_{\rho^{(m)} \sigma^{(m)}} \!\! :\kittyket \nonumber \\
        &\quad \quad +\mathcal{O}(\epsilon).
    \end{align}
    
    In the last equation above, we have used the formula from equation \eqref{eq:claim4}. Now, by recalling the definition of the first order moment from equation \eqref{eq:higherMoment} and combining with equation \eqref{eq:hat_mu_m}, we get
    \begin{align} \label{eq:mu_m_mu_mu}
        \tilde{\mu}_m(x,x',\cdots,x^{(m)}) = 
        & \ \tilde{\mu}(x) \times \tilde\mu(x') \times \cdots \tilde{\mu}(x^{(m)}) \nonumber \\ &\quad \quad +\mathcal{O}(\epsilon).
    \end{align}
    This implies
    \begin{align}
    \mathcal{P}\big[\tilde{\mu}_m(x,x',\cdots,x^{(m)})] = 
        & \ m! \times \tilde{\mu}(x) \times \tilde\mu(x') \times \cdots \tilde{\mu}(x^{(m)}) \nonumber \\ &\quad \quad +\mathcal{O}(\epsilon).
    \end{align}
    Next, by plugging this to the definition of $n$th order moment of stress-energy tensor defined in equation \eqref{eq:nthmoment}, 
    \begin{align}
        \mu_{n}(x,x',\cdots,x^{(n)}) &= \sum^n_{m=0} \binom{n}{m} (-1)^{n-m} \nonumber \\ 
        &\times \tilde{\mu}(x) \times \cdots \tilde{\mu}(x^{(m)}) \times\tilde{\mu}(x^{(m+1)}) \nonumber \\
        &\times\cdots\times\tilde{\mu}(x^{(n)})+ \mathcal{O}(\epsilon) \nonumber\\
        &=\mathcal{O}(\epsilon). 
    \end{align}
    
    % For cat states, if we can show that $\Delta_{\mu\nu\rho'\sigma'}(\mathsf x, \mathsf x')$ in the equation \ref{eq:KuoFord} is $\mathcal{O}(\epsilon)$, we are done with the first case. Since we are assuming $||\alpha||\neq 0$, we don't have to worry about the denominator. From the proposition \ref{proposition1} and the theorem \ref{thm:poly}, we have
    % \begin{align}
    %     \kittybra:\hat T_{\mu \nu} \hat T'_{\sigma' \rho'}:\kittyket &= \bra{\alpha}:\hat T_{\mu \nu}\hat T'_{\sigma' \rho'}:\ket{\alpha} + \mathcal{O}(\epsilon)\\
    %     &= \bra{\alpha}:\hat T_{\mu \nu} :\ket{\alpha} \nonumber \\
    %     &\quad\times\bra{\alpha}:\hat T'_{\sigma' \rho'}:\ket{\alpha} +\mathcal{O}(\epsilon) \\
    %     &=\kittybra:\hat T_{\mu \nu} :\kittyket \nonumber \\&\quad\times\kittybra:\hat T'_{\sigma' \rho'}:\kittyket+\mathcal{O}(\epsilon).
    %     \label{eq:thmcatTTcat}
    % \end{align}
    % The last equation shows that $\Delta_{\mu\nu\rho'\sigma'}(\mathsf x, \mathsf x') \ll 1$.
    This concludes the first part of the proof for the theorem. However, we can do better than this if we choose $a = \cos{\theta}$ and $b = \ii \sin{\theta}$. We need to check the normalization condition $\braket{\raisebox{-.43ex}{\SchrodingersCat{1}}\!\!}{\raisebox{-.43ex}{\SchrodingersCat{1}}\!\!} = 1$, by noticing that
    \begin{align}
        \braket{\raisebox{-.43ex}{\SchrodingersCat{1}}\!\!}{\raisebox{-.43ex}{\SchrodingersCat{1}}\!\!} &= \cos^2{\theta} \braket{\alpha}{\alpha} + \sin^2{\theta} \braket{-\alpha}{-\alpha} \nonumber \\ &- \ii  \sin{\theta} \cos{\theta} \braket{\alpha}{-\alpha} + \ii  \sin{\theta} \cos{\theta} \braket{-\alpha}{\alpha}, \label{eq:normalkitty}
    \end{align}
    $\braket{\alpha}{\alpha}=\braket{-\alpha}{-\alpha}$, and $\braket{-\alpha}{\alpha}=\braket{\alpha}{-\alpha}$. The most important part here is that $\mathcal{O}(\epsilon)$ doesn't appear in the equation \eqref{eq:normalkitty} because of the specific choice of coefficients. This can also be applied to Proposition \ref{proposition1}, that is, for this specific choice of coefficients,
    \begin{equation}
        \kittybra: P[\hat T_{\mu \nu}]:\kittyket = \bra{\alpha}:\hat P[T_{\mu \nu}]:\ket{\alpha}.
    \end{equation}
    In similar fashion of the first part of the proof, it can be derived that $\mu_{n} = 0$ when $\kittyket=\cos{\theta}\ket{\alpha}+\ii \sin{\theta}\ket{-\alpha}$. 
\end{proof}

\begin{corollary}
    In a globally hyperbolic stationary spacetime, the cat state under the conditions mentioned in Theorem \ref{thm:cat} satisfies the Kuo-Ford criterion.
\end{corollary}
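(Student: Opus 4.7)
The plan is to recognize the Kuo--Ford estimator~\eqref{eq:KuoFord} as essentially a ratio whose numerator is the second central moment $\mu_2$ of the stress-energy tensor (covariance of $\hat T_{\mu\nu}(\mathsf x)$ and $\hat T_{\lambda\rho}(\mathsf x')$) and whose denominator is the two-point correlator. Theorem~\ref{thm:cat} has already established that every central moment $\mu_n$ with $n \ge 2$ vanishes on the cat state---either up to $\mathcal{O}(\epsilon)$ for generic coefficients, or exactly for the choice $a=\cos\theta$, $b=\ii\sin\theta$. Hence the argument essentially reduces to identifying the numerator of $\Delta_{\mu\nu\lambda\rho}$ with $\mu_2$ and checking that the denominator does not degenerate.

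First I would evaluate the numerator $\kittybra :\hat T_{\mu\nu}(\mathsf x)\hat T_{\lambda\rho}(\mathsf x'):\kittyket - \kittybra:\hat T_{\mu\nu}(\mathsf x):\kittyket\,\kittybra:\hat T_{\lambda\rho}(\mathsf x'):\kittyket$ and observe that it coincides, up to the symmetrization prescribed by $\mathcal{P}$, with $\mu_2(\mathsf x,\mathsf x')$ as defined in~\eqref{eq:nthmoment}. Theorem~\ref{thm:cat} then immediately gives that this quantity is $\mathcal{O}(\epsilon)$ in general, and exactly zero in the symmetric case.

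Next I would control the denominator $\kittybra :\hat T_{\mu\nu}(\mathsf x)\hat T_{\lambda\rho}(\mathsf x'):\kittyket$ using Proposition~\ref{proposition1} (equation~\eqref{eq:claim4b}) to replace the cat-state expectation by $\bra{\alpha}:\hat T_{\mu\nu}(\mathsf x)\hat T_{\lambda\rho}(\mathsf x'):\ket{\alpha} + \mathcal{O}(\epsilon)$, and then apply Corollary~\ref{corollary2} to factor the coherent-state expectation as $\bra{\alpha}:\hat T_{\mu\nu}(\mathsf x):\ket{\alpha}\, \bra{\alpha}:\hat T_{\lambda\rho}(\mathsf x'):\ket{\alpha}$. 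This is the classical-looking stress-energy density sourced by the coherent amplitude $\alpha$ via Corollary~\ref{corollary1}, and it is generically of order unity (in the natural units of the problem) rather than small in $\epsilon$. Putting numerator and denominator together yields $\Delta_{\mu\nu\lambda\rho}(\mathsf x,\mathsf x') = \mathcal{O}(\epsilon) \ll 1$ (and exactly zero in the symmetric case), which is the Kuo--Ford criterion.

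The only subtle point---and what I expect to be the main, though mild, obstacle---is that the denominator could vanish at isolated spacetime points where the classical stress-energy built from $\bar\alpha+\alpha$ happens to vanish, making the ratio ill-defined pointwise. I would address this exactly as Kuo and Ford implicitly do for coherent states: the criterion should be read as ``the numerator is small compared to the magnitude of the denominator wherever the denominator is non-negligible,'' and on that (generic, dense) set the estimator is $\mathcal{O}(\epsilon)$, which is what is needed for semiclassical gravity to apply.
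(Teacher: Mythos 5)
Your proposal is correct and follows essentially the same route as the paper, which simply invokes the $n=2$ case of Theorem~\ref{thm:cat} together with the definition~\eqref{eq:KuoFord}; your identification of the Kuo--Ford numerator with (the unsymmetrized version of) $\mu_2$ is sound because the factorization $\tilde\mu_2(\mathsf x,\mathsf x')=\tilde\mu(\mathsf x)\tilde\mu(\mathsf x')+\mathcal{O}(\epsilon)$ established in the proof of Theorem~\ref{thm:cat} holds before symmetrization. Your additional remark about the denominator possibly vanishing at isolated points is a legitimate caveat that the paper leaves implicit, but it does not change the argument.
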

\begin{proof}
    The proof is trivial by using the special case $n=2$ of Theorem \ref{thm:cat} and equation \eqref{eq:KuoFord}.
\end{proof}

% \begin{lemma} \label{lemma:mu}
%     If $\ket{\psi}$ is coherent state or cat states with condition mentioned in proposition \ref{proposition1} and $n \ge 2$,
%     \begin{equation}
%         \hat{\mu}_n(x,x',\cdots,x^{(n)}) = \hat{\mu}(x)\times\hat{\mu}(x')\times\cdots\times\hat{\mu}(x^{(n)}).
%     \end{equation}
% \end{lemma}
% \begin{proof}
%     This proof is just a direct consequence of the theorem \ref{thm:poly} and proposition \ref{proposition1}.
% \end{proof}
% \begin{proposition}
%     The higher order moments $\mu_n (n \ge 2)$   of energy-momentum tensor vanishes for coherent states and cat states with conditions mentioned in proposition \ref{proposition1}.
% \end{proposition}
% \begin{proof}
%     In equation \ref{eq:higherMoment} by plugging in result from lemma \ref{lemma:mu}, we can easily derive that $\mu_n(x,x',\cdots,x^{(n)}) = 0$ for $n\ge2$.
% \end{proof}

\section{\label{sec:conclusions} Conclusions}

We have shown that cat states (which can be seen as the quantum superpositions of two classical states) can gravitate semiclassically. This happens in two different regimes:
\begin{enumerate}
\item When we consider large cat states.---Namely, when the coherent amplitude of the cat state is large enough (so the inner product between the two terms in the superposition is small);
\item For any cat state for which the coefficients of the supersposition are such that the relative phase between the two coherent states is $\pi/2$.
\end{enumerate}

More specifically, we have shown that the uncertaintly---as well as all the symmetrized (regularized) central higher moments--- of the stress-energy density for the cat state either become negligible (for case 1) or vanish exactly (for case 2), despite the fact that cat states are not Gaussian. 

According to the Kuo-Ford criteria~\cite{Kuo93}, this means that, the cat states in cases 1 and 2, despite displaying genuine quantum behaviour, are within the scope of validity of semiclassical gravity. Notice that this statement relies heavily on the coherent nature of the two superimposed coherent states forming a cat state. This means that this result does not apply to superpositions of macroscopic classical states states other than coherent states.

\begin{acknowledgments}
The authors thank Albert Roura for very insightful discussions during RQI-N 2023. Research at Perimeter Institute is supported in part by the Government of Canada through the Department of Innovation, Science and Economic Development Canada and by the Province of Ontario through the Ministry of Colleges and Universities. E. M-M. is funded by the NSERC Discovery program as well as his Ontario Early Researcher Award.
\end{acknowledgments}

\appendix

\section{Proof of equation \eqref{eqn:ccrelation}}\label{appendixA}

% Here we will show equation \eqref{eqn:ccrelation} in the continuous form. Notice that the discrete version of the proof is analogous.

% \textbf{Claim}
% \begin{eqnarray}
%     [\hat{a}_{\bm{k}}, \hat D(\alpha)] = \alpha(\bm{k}) \hat D(\alpha).
% \end{eqnarray}

% \textbf{Proof:}
% \begin{eqnarray}
%     [\hat{a}_{\bm{k}}, \hat D(\alpha)] 
%     &=&\left[\hat{a}_{\bm{k}}, \sum_{n=0}^{\infty} \frac{1}{n!}\left(\int \dd^d\bm{p} \  (\alpha(\bm{p})\hat{a}^{\dagger}_{\bm{p}}-\alpha^*(\bm{p})\hat{a}_{\bm{p}})\right)^n\right] \nonumber \\
%     &=& \sum_{n=0}^{\infty} \frac{n}{n!} \left[\hat{a}_{\bm{k}} , \int \dd^d\bm{p} \  (\alpha(\bm{p})\hat{a}^{\dagger}_{\bm{p}}-\alpha^*(\bm{p})\hat{a}_{\bm{p}})\right]  \nonumber \\ 
%     && \left(\int \dd^d\bm{p} \  (\alpha(\bm{p})\hat{a}^{\dagger}_{\bm{p}}-\alpha^*(\bm{p})\hat{a}_{\bm{p}})\right)^{n-1} \nonumber \\
%     &=& \sum_{n=0}^{\infty} \frac{n}{n!} \alpha(\bm{k}) \left(\int d^d\bm{p} \  (\alpha(\bm{p})\hat{a}^{\dagger}_{\bm{p}}-\alpha^*(\bm{p})\hat{a}_{\bm{p}})\right)^{n-1} \nonumber \\
%     &=& \alpha(\bm{k}) \hat D(\alpha). \nonumber
% \end{eqnarray}

\textbf{Claim}
\begin{eqnarray}
    [\hat{a}_i, \hat D(\alpha)] = \alpha_i \hat D(\alpha).
\end{eqnarray}

\textbf{Proof:}
\begin{eqnarray}
    [\hat{a}_i, \hat D(\alpha)] 
    &=&\left[\hat{a}_i, \sum_{n=0}^{\infty} \frac{1}{n!}\left(\sumint_j \  (\alpha_j\hat{a}^{\dagger}_j-\alpha^*_j\hat{a}_j)\right)^n\right] \nonumber \\
    &=& \sum_{n=1}^{\infty} \frac{n}{n!} \left[\hat{a}_i , \left(\sumint_k \  (\alpha_k\hat{a}^{\dagger}_k-\alpha^*_k\hat{a}_k)\right)\right]  \nonumber \\ 
    && \quad \quad \quad \times \left(\sumint_j \  (\alpha_j\hat{a}^{\dagger}_j-\alpha^*_j\hat{a}_j)\right)^{n-1} \nonumber \\
    &=& \sum_{n=1}^{\infty} \frac{n}{n!} \alpha_i \left(\sumint_j \  (\alpha_j\hat{a}^{\dagger}_j-\alpha^*_j\hat{a}_j)\right)^{n-1} \nonumber \\
    &=& \alpha_i \hat D(\alpha). \nonumber
\end{eqnarray}

\bibliography{references}% Produces the bibliography via BibTeX.

\end{document}